\newcommand{\complexityclass}[1]{\textbf{#1}}
\newcommand{\computproblem}[1]{\textsc{#1}}
\renewcommand{\P}{\complexityclass{P}\xspace}
\newcommand{\NP}{\complexityclass{NP}\xspace}
\newcommand{\TEXP}{\complexityclass{2-EXP}\xspace}
\newcommand{\PSPACE}{\complexityclass{PSPACE}\xspace}
\newcommand{\NPSPACE}{\complexityclass{NPSPACE}\xspace}
\newcommand{\EENCL}{\computproblem{EE-NCL}\xspace}
\newcommand{\EEANCL}{\computproblem{EE-ANCL}\xspace}
\newcommand{\SSP}{\computproblem{SSP}\xspace}
\newcommand{\PSSP}{\computproblem{PSSP}\xspace}
\title{Partial Searchlight Scheduling is Strongly \PSPACE-Complete}
\author{Giovanni Viglietta\thanks{School of Computer Science, Carleton University, Ottawa ON, Canada, {\tt viglietta@gmail.com}.}}
\begin{document}
\thispagestyle{empty}
\maketitle

\begin{abstract}
The problem of searching a polygonal region for an unpredictably moving intruder by a set of stationary guards, each carrying an orientable laser, is known as the \computproblem{Searchlight Scheduling Problem}. Determining the computational complexity of deciding if the polygon can be searched by a given set of guards is a long-standing open problem.

Here we propose a generalization called the \computproblem{Partial Searchlight Scheduling Problem}, in which only a given subregion of the environment has to be searched, as opposed to the entire area. We prove that the corresponding decision problem is strongly \PSPACE-complete, both in general and restricted to orthogonal polygons where the region to be searched is a rectangle.

Our technique is to reduce from the ``edge-to-edge'' problem for \emph{nondeterministic constraint logic machines}, after showing that the computational power of such machines does not change if we allow ``asynchronous'' edge reversals (as opposed to ``sequential'').
\end{abstract}

\section{Introduction} \label{sectionintro}

\paragraph{Previous work.}

The \computproblem{Searchlight Scheduling Problem} (\SSP), first studied in~\cite{search}, is a pursuit-evasion problem in which a polygon has to be searched for a moving intruder by a set of stationary guards. The intruder moves unpredictably and continuously with unbounded speed, and each guard carries an orientable \emph{searchlight}, emanating a 1-dimensional ray that can be continuously rotated about the guard itself. The polygon's exterior cannot be traversed by the intruder, nor penetrated by searchlights. The intruder is caught whenever it is hit by a searchlight. Because the intruder's location is unknown until it is actually caught, each guard has to sway its searchlight according to a predefined \emph{schedule}. If the guards always catch the intruder, regardless of its path, by following their schedules in concert, they are said to have a \emph{search schedule}.

\SSP is the problem of deciding if a given set of guards has a search schedule for a given polygon (possibly with holes). The computational complexity of this decision problem has been only marginally addressed in~\cite{search}, but has later gained more attention, until in~\cite{bullo} the space of all possible schedules has been shown to be discretizable and reducible to a finite graph, which can be explored systematically to find a search schedule, if one exists. Since the graph may have double exponential size, this technique easily places \SSP in \TEXP. Whether \SSP is \NP-hard or even in \NP is left in~\cite{bullo} as an open problem.

More recently, in~\cite{viglietta2,viglietta}, the author studied the complexity of a 3-dimensional version of \SSP, in which  the input polygonal environment is replaced by a polyhedron, and the 1-dimensional rays become 2-dimensional half-planes, which rotate about their boundary lines. This variation of \SSP is shown to be strongly \NP-hard.

\paragraph{Our contribution.}

In the present paper we take a further step along this line of research, by introducing the \computproblem{Partial Searchlight Scheduling Problem} (\PSSP), in which the guards content themselves with searching a smaller subregion given as input. That is, a search schedule should only guarantee that the given \emph{target region} is eventually cleared, either by catching the intruder or by confining it outside. We prove that \PSSP is strongly \PSPACE-complete, both for general polygons and restricted to orthogonal polygons in which the region to be searched is a rectangle.

To prove that \PSSP is a member of \PSPACE, we do a refined analysis of the discretization technique of~\cite{bullo}. To prove \PSPACE-hardness, we give a reduction from the ``edge-to-edge'' problem for nondeterministic constraint logic machines, discussed in~\cite{ncl}. Another contribution of this paper is the observation that the nondeterministic constraint logic model of computation stays essentially the same if we allow ``asynchronous'' moves, as opposed to ``sequential'' ones.

An earlier version of this paper has appeared in \cite{eurocg}, and most of the material is also contained in the author's Ph.D. thesis \cite{thesis}.

\section{Preliminary observations}

An instance of \PSSP is a triplet $(\mathcal P,G,\mathcal T)$, where $\mathcal P$ is a polygon, possibly with holes, $G$ is a finite set of point guards located in $\mathcal P$ or on its boundary, and $\mathcal T \subseteq \mathcal P$ is a target polygonal region. The question is whether the guards in $G$ can turn their lasers in concert, from a fixed starting position and following a finite schedule, so as to guarantee that in the end any intruder that moves in $\mathcal P$ and tries to avoid lasers is necessarily not in $\mathcal T$.

We remark that $\SSP \preceq_{\P} \PSSP$ trivially, in that in \SSP we have $\mathcal T=\mathcal P$ always. One feature of \SSP that is not preserved by this generalization is what we call the \emph{time reversal invariance} property. In \SSP, a given schedule successfully searches $\mathcal P$ if and only if reversing it with respect to time also searches $\mathcal P$. In contrast, this is not the case with \PSSP, and Figure~\ref{fig0:a} shows a simple example. The dark target region can be cleared only if the guard turns its searchlight clockwise, as indicated by the arrow. If the searchlight is turned counterclockwise instead, the intruder can first hide in the protected area on the left, then come out and safely reach the target region. Protected areas like this one, that cannot be searched because they are invisible to all guards, provide a constant source of \emph{recontamination}, and will be extensively used in our main \PSPACE-hardness reduction (see Lemma~\ref{lemma2}).

\begin{figure}[h]
\centering
\subfigure[]{\label{fig0:a}\includegraphics[scale=.65]{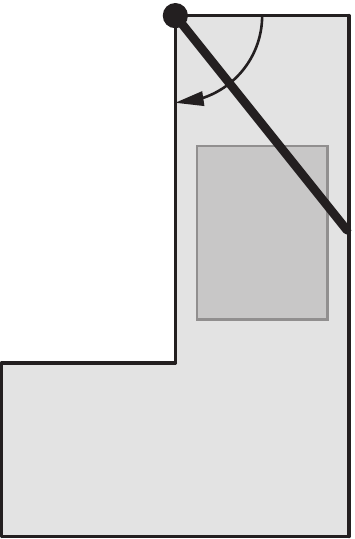}}\qquad\qquad\quad
\subfigure[]{\label{fig0:b}\includegraphics[scale=.65]{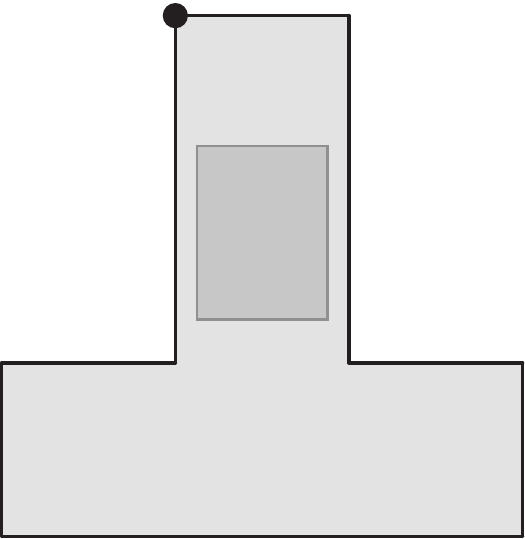}}
\caption{Two instances of \PSSP}
\label{fig0}
\end{figure}

A search heuristic called \emph{one-way sweep strategy} was described in~\cite{search} for \SSP restricted to simple polygons, and later extended to polygons with holes in~\cite{search2}. An interesting consequence of this heuristic is that, if a set of guards lies on the boundary of a simple polygon, and no point in the polygon is invisible to all guards, then there is a schedule that successfully searches the whole polygon. However, such property does not straightforwardly generalize to \PSSP, as Figure~\ref{fig0:b} illustrates. Here we have a simple polygon with a guard on the boundary that can see the whole target region. But, because there are protected areas on both sides, the target region is unsearchable, no matter in which direction the guard turns it searchlight.

As it turns out, adding just another guard anywhere on the boundary of the polygon in Figure~\ref{fig0:b} makes the target region searchable. For example, if the second guard is placed on the top-right corner, it can orient its laser downward, thus ``closing'' the right protected area, and allowing the first guard to search the target region as in Figure~\ref{fig0:a}. On the other hand, if the second guard is placed on the boundary of any protected area, then the whole polygon is visible to the guards, and therefore we know that it can be entirely searched.

\section{NCL machines and asynchrony} \label{sectionncl}

Our \PSPACE-hardness reduction is based on a model of computation called nondeterministic constraint logic, whose definition and main properties are detailed in~\cite{ncl}. Here we extend the basic model by introducing \emph{asynchrony}, and showing that its computational power stays the same.

\paragraph{Basic NCL machines.}

Consider an undirected 3-connected 3-regular planar graph, whose vertices can be of two types: \emph{AND vertices} and \emph{OR vertices}. Of the three edges incident to an AND vertex, one is called its \emph{output edge}, and the other two are its \emph{input edges}. Such a graph is (a special case of) a \emph{nondeterministic constraint logic machine (NCL machine)}. A \emph{legal configuration} of an NCL machine is an orientation (direction) of its edges, such that:
\begin{itemize}
\item for each AND vertex, either its output edge is directed inward, or both its input edges are directed inward;
\item for each OR vertex, at least one of its three incident edges is directed inward.
\end{itemize}
A \emph{legal move} from a legal configuration to another configuration is the reversal of a single edge, in such a way that the above constraints remain satisfied (i.e., such that the resulting configuration is again legal).

Given an NCL machine with two \emph{distinguished edges} $e_a$ and $e_b$, and a \emph{target orientation} for each, we consider the problem of deciding if there exist legal configurations $A$ and $B$ such that $e_a$ has its target orientation in $A$, $e_b$ has its target orientation in $B$, and there is a sequence of legal moves from $A$ to $B$. In a sequence of moves, the same edge may be reversed arbitrarily many times. We call this problem \computproblem{Edge-to-Edge for Nondeterministic Constraint Logic machines} (\EENCL).

A proof that \EENCL is \PSPACE-complete is given in~\cite{ncl}, by a reduction from \computproblem{True Quantified Boolean Formula}. By inspecting that reduction, we may further restrict the set of \EENCL instances on which we will be working. Namely, we may assume that $e_a\neq e_b$, and that in no legal configuration both $e_a$ and $e_b$ have their target orientation.

\paragraph{Asynchrony.}

For our main reduction, it is more convenient to employ an \emph{asynchronous} version of \EENCL. Intuitively, instead of ``instantaneously'' reversing one edge at a time, we allow any edge to start reversing at any given time, and the \emph{reversal phase} of an edge is not ``atomic'' and instantaneous, but may take any strictly positive amount of time. It is understood that several edges may be in a reversal phase simultaneously. While an edge is reversing, its orientation is undefined, hence it is not directed toward any vertex. During the whole process, at any time, both the above constraints on AND and OR vertices must be satisfied. We also stipulate that no edge is reversed infinitely many times in a bounded timespan, or else its orientation will not be well-defined in the end. With these extended notions of configuration and move, and with the introduction of ``continuous time'', \EENCL is now called \computproblem{Edge-to-Edge for Asynchronous Nondeterministic Constraint Logic machines} (\EEANCL).

Despite its asynchrony, such new model of NCL machine has precisely the same power of its traditional synchronous counterpart.

\begin{theorem}\label{asynch}$\EENCL = \EEANCL$.\end{theorem}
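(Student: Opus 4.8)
The inclusion $\EENCL \subseteq \EEANCL$ is immediate: any synchronous schedule already is a legal asynchronous one, obtained by letting each edge reversal occupy its own short time interval, with no two intervals overlapping. Hence all the work lies in the converse, i.e.\ in extracting, from an asynchronous schedule $A \rightsquigarrow B$, a synchronous schedule between the \emph{same} configurations $A$ and $B$ (so that the conditions on the distinguished edges $e_a,e_b$ and their target orientations carry over verbatim).

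The crux is a \emph{monotonicity} property of the legality constraints. An edge that is currently reversing is directed toward neither of its endpoints, so it contributes nothing toward satisfying any AND or OR constraint; assigning it \emph{any} orientation can therefore only turn unsatisfied constraints into satisfied ones, never the reverse. Formally: if $C$ is a legal configuration (possibly with several reversing edges) and $C'$ is obtained from $C$ by giving one reversing edge an orientation, then $C'$ is again legal; iterating, one may replace every reversing edge of a legal configuration by an arbitrary orientation and stay legal. This is the only non-bookkeeping fact needed, and its proof is a one-line check per vertex (the ``inward'' set at every vertex only grows, and both constraint types are monotone in that set).

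Granting this, I would argue as follows. A valid \EEANCL\ certificate has only finitely many reversal phases in total (finitely many edges, each reversing finitely often within the bounded duration of the schedule), so the set of \emph{events} --- the reversal-phase endpoints --- is finite, and between consecutive events the set of reversing edges and the orientations of the non-reversing edges are constant. Define, for each time $\tau$, a fully defined configuration $\phi(\tau)$ by keeping the orientation of every non-reversing edge and replacing every reversing edge by the orientation it will hold at the end of its current reversal phase (``commit early''). By monotonicity, $\phi(\tau)$ is legal for every $\tau$, while $\phi$ equals $A$ before the first event and $B$ after the last one. Next I examine how $\phi$ changes across an event time $\tau_0$: an edge that merely \emph{finishes} a phase at $\tau_0$ already had its $\phi$-value set to its post-reversal orientation, so it does not change; only edges that \emph{start} a phase at $\tau_0$ change their $\phi$-value, and each such edge changes by exactly one reversal. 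Thus $\phi(\tau_0^-)$ and $\phi(\tau_0^+)$ differ by flipping a finite set $F$ of edges; performing these flips one at a time in an arbitrary order produces intermediate configurations, each of which is readily seen to be of the form ``the actual asynchronous configuration just after $\tau_0$, with each of its reversing edges assigned some orientation'', hence legal by monotonicity. Concatenating these single-edge moves over all events yields a finite sequence of legal synchronous moves from $A$ to $B$, proving the instance is a yes-instance of \EENCL.

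The main obstacle I anticipate is not conceptual but organizational: pinning down the snapshot/event structure of a continuous-time schedule, handling the degenerate cases where a phase ends and another begins at the same instant or where several phases begin simultaneously, and checking that the intermediate configurations arising from serializing simultaneous reversals really are of the ``legal configuration plus some orientation assignments'' shape, so that the single monotonicity lemma suffices. The underlying idea that makes the serialization work without any backtracking is simply that a reversing edge is the worst case for every constraint, so committing early to the post-reversal orientation is always safe.
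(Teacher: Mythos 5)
Your proposal is correct and matches the paper's argument: the paper likewise serializes the asynchronous schedule by performing each reversal at the start time of its reversal phase, and justifies legality of the intermediate configurations by exactly your monotonicity observation (a reversing edge points toward no vertex, so the inward sets only grow when it is committed to an orientation, and both constraint types are monotone in those sets). Your ``commit early'' function $\phi$ and the handling of simultaneous phase starts are just a slightly more explicit packaging of the same serialization.
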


\begin{proof}
Obviously $\EENCL \subseteq \EEANCL$, because any sequence of moves in the synchronous model trivially translates into an equivalent sequence for the asynchronous model.

For the opposite inclusion, we show how to ``serialize'' a legal sequence of moves for an asynchronous NCL machine going from a legal configuration $A$ to configuration $B$ in a bounded timespan, in order to make it suitable for the synchronous model. An asynchronous sequence is represented by a set
$$S=\{(e_m, s_m, t_m) \mid m\in M\},$$ where $M$ is a set of ``edge reversal events'', $e_m$ is an edge with a reversal phase starting at time $s_m$ and terminating at time $t_m > s_m$. For consistency, no two reversal phases of the same edge may overlap.

Because no edge can be reversed infinitely many times, $S$ must be finite. Hence we may assume that $M=\{1, \cdots, n\}$, and that the moves are sorted according to the (weakly increasing) values of $s_m$, i.e., $1\leqslant m < m' \leqslant n \implies s_m \leqslant s_{m'}$. Then we consider the serialized sequence
$$S'=\{(e_m, m, m) \mid m\in M\},$$
and we claim that it is valid for the synchronous model, and that it is equivalent to $S$.

Indeed, each move of $S'$ is instantaneous and atomic, no two edges reverse simultaneously, and every edge is reversed as many times as in $S$, hence the final configuration is again $B$ (provided that the starting configuration is $A$). We still have to show that every move in $S'$ is legal. Let us do the first $m$ edge reversals in $S'$, for some $m\in M$, starting from configuration $A$, and reaching configuration $C$. To prove that $C$ is also legal, consider the configuration $C'$ reached in the asynchronous model at time $s_m$, according to $S$, right when $e_m$ starts its reversal phase (possibly simultaneously with other edges). By construction of $S'$, every edge whose direction is well-defined in $C'$ (i.e., every edge that is not in a reversal phase) has the same orientation as in $C$. It follows that, for each vertex, its inward edges in $C$ are a superset of its inward edges in $C'$. By assumption on $S$, $C'$ satisfies all the vertex constraints, then so does $C$, {\it a fortiori}.
\end{proof}

\begin{cor}\label{eeancl}\EEANCL is \PSPACE-complete. \hfill $\square$\end{cor}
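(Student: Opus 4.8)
The plan is to obtain the corollary as an immediate consequence of Theorem~\ref{asynch} together with the \PSPACE-completeness of \EENCL established in~\cite{ncl}. First I would stress that \EENCL and \EEANCL are decision problems over the \emph{same} instances---an NCL machine equipped with two distinguished edges $e_a, e_b$ and a target orientation for each---and differ only in the notion of ``sequence of moves'' used to phrase the reachability question. Theorem~\ref{asynch} asserts precisely that this difference is immaterial: the two problems have identical sets of yes-instances. Hence any complexity-theoretic classification of \EENCL transfers verbatim to \EEANCL, in particular membership in \PSPACE and \PSPACE-hardness under polynomial-time many-one reductions.

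Combining these facts gives the claim: \EENCL lies in \PSPACE and is \PSPACE-hard by~\cite{ncl}, so \EEANCL, being the same problem, is \PSPACE-complete. As a bonus, the restrictions noted earlier for \EENCL---that $e_a \neq e_b$ and that no legal configuration has both $e_a$ and $e_b$ in their target orientations---remain available when working with \EEANCL, since they are phrased purely in terms of configurations, which are common to the two models; this will be convenient in the main reduction.

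There is essentially no obstacle to overcome here: the mathematical content has already been discharged in the proof of Theorem~\ref{asynch}, in particular in the serialization argument that turns a bounded-timespan asynchronous schedule into an equivalent synchronous one. The only point deserving a sentence of care is making explicit that the encodings of the two problems coincide, so that the equality in Theorem~\ref{asynch} is a genuine identity of decision problems rather than a mere mutual reducibility; with that remark in place, the corollary follows with no further work.
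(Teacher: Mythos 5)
Your argument is exactly the intended one: the corollary follows immediately from Theorem~\ref{asynch} together with the \PSPACE-completeness of \EENCL from~\cite{ncl}, which is why the paper states it without further proof. Your additional remark that the two problems share the same instances (so the classification transfers as an identity, not merely a reduction) is a correct and harmless clarification.
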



\section{\PSPACE-completeness of \PSSP}

To prove that \PSSP belongs to \PSPACE we use the discretization technique of~\cite{bullo}, and to prove that \PSSP is \PSPACE-hard we give a reduction from \EEANCL.

\paragraph{Membership.}

Due to Savitch's theorem, it suffices to show that \PSSP belongs to \NPSPACE.

\begin{lemma} \label{lemma1}
\PSSP $\in$ \NPSPACE.
\end{lemma}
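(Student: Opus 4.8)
The plan is to invoke the discretization machinery of~\cite{bullo} and re-examine it with an eye toward space complexity rather than the crude time bound used there. Recall that in~\cite{bullo} it is shown that the continuum of possible searchlight schedules can be collapsed to a finite combinatorial object: each guard's laser direction only ``matters'' at finitely many critical angles (those aligned with polygon vertices, other guards, or intersection points induced by the current illuminated region), and the contaminated/cleared status of the polygon is constant on each cell of the induced arrangement. First I would make precise a notion of \emph{combinatorial state}: the current critical direction of each guard's laser, together with the subset of arrangement cells that are currently contaminated. The key observations are (i) this state can be encoded in polynomial space --- there are polynomially many guards, each with a polynomial number of critical angles, and the arrangement of illuminated regions has polynomially many cells, each of whose contaminated bit costs one unit of storage; and (ii) a single ``elementary move'' (rotating one guard's laser to an adjacent critical direction, possibly sweeping it continuously past intermediate non-critical directions) transforms one combinatorial state into another, and this transition can be computed in polynomial space by a routine geometric calculation on the polygon's coordinates.

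Next I would set up the nondeterministic search. The algorithm starts from the initial combinatorial state (all guards in their fixed starting positions, with the contaminated set being everything not initially illuminated), and repeatedly guesses an elementary move, updating the combinatorial state in place; it accepts as soon as the target region $\mathcal T$ is entirely contained in the cleared (non-contaminated) cells. Because the whole configuration graph of combinatorial states has at most exponentially many vertices, any reachable accepting state is reachable by a path of at most exponential length, so the nondeterministic procedure need only maintain a binary step-counter of polynomial width to guarantee termination. Since every component --- the state, the counter, and the per-step transition computation --- fits in polynomial space, this places \PSSP in \NPSPACE; then Savitch's theorem yields \PSPACE. One subtlety to handle carefully is that the contamination update is \emph{not} symmetric in time: when a laser sweeps, a cell becomes cleared only if it is cut off from every contaminated cell, whereas recontamination spreads freely through any cleared cell adjacent to a contaminated one; I would state the cell-adjacency recontamination rule explicitly and verify it matches the semantics of~\cite{bullo} (and the informal discussion around Figure~\ref{fig0}).

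The main obstacle I anticipate is not the search framework --- that is the standard \NPSPACE\ pattern --- but rather pinning down precisely which geometric events must be treated as ``critical'' so that the arrangement genuinely has polynomial size \emph{and} the combinatorial state captures enough information that the contamination dynamics are faithful. In~\cite{bullo} the discretization is engineered to be finite but potentially doubly exponential; the work here is to argue that the relevant subdivision at any one instant --- as opposed to the product over the whole schedule --- is only polynomially complex, so that a single state is polynomial-size even though the reachable state space is exponential. I would lean on the fact that at a fixed moment the illuminated region is a union of at most $|G|$ triangular (or polygonal) wedges clipped to $\mathcal P$, whose overlay with $\mathcal P$ has complexity polynomial in $|G|$ and the number of vertices of $\mathcal P$, and that a laser crossing a vertex of this overlay is the only kind of event that can change the combinatorial state --- so the out-degree of each state in the configuration graph is also polynomially bounded. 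Once that is nailed down, the rest is bookkeeping.
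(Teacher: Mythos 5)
Your proposal is correct and follows essentially the same route as the paper: it reuses the exact cell decomposition of~\cite{bullo}, observes that a single combinatorial state (one critical angle per guard plus one contamination bit per cell) fits in polynomial space while only the reachable state space is exponential, navigates that graph nondeterministically, and adapts the acceptance condition to require only the cells meeting $\mathcal T$ to be clear. The paper's proof is essentially this argument, noting additionally that cell vertices are intersections of lines through input points and hence representable by polynomial-size rational expressions; your worries about dynamically defined critical events are unnecessary since the critical angles in~\cite{bullo} are determined statically by the input.
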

\begin{proof}
As detailed in~\cite{bullo}, a technique known as \emph{exact cell decomposition} allows to reduce the space of all possible schedules to a finite graph $G$. Each searchlight has a linear number of \emph{critical angles}, which yield an overall partition of the polygon into a polynomial number of \emph{cells}. In the discretized search space, searchlights take turns moving, and can stop or change direction only at critical angles. Thus, a vertex of $G$ encodes the \emph{status} of each cell (either \emph{contaminated} or \emph{clear}) and the critical angle at which each searchlight is oriented.

As a consequence, $G$ can be navigated nondeterministically by just storing one vertex at a time, which requires polynomial space. Notice that deciding if two vertices of $G$ are adjacent can be done in polynomial time: an edge in $G$ represents a move of a single searchlight between two consecutive critical angles, and the updated status of each cell can be easily evaluated. Indeed, cells' vertices are intersections of lines through input points, hence their coordinates can also be efficiently stored and handled as rational expressions involving the input coordinates.

Now, in order to verify that a path in $G$ is a witness for \SSP, one checks if the last vertex encodes a status in which every cell is clear. But the very same cell decomposition works also for \PSSP: the analysis in~\cite{bullo} applies even if just a subregion of the polygon has to be searched, and a path in $G$ is a witness for \PSSP if and only if its last vertex encodes a status in which every cell that has a non-empty intersection with the target subregion is clear.
\end{proof}

\paragraph{Hardness.}

For the \PSPACE-hardness part, we first give a reduction in which the target region to be cleared is an orthogonal hexagon. Then, in Section~\ref{convexregion} we will explain how to modify our construction, should we insist on having a rectangular (hence convex) target region.

\begin{lemma} \label{lemma2}
\EEANCL $\preceq_{\P}$ \PSSP restricted to orthogonal polygons.
\end{lemma}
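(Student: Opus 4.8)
The plan is to build, from an arbitrary \EEANCL instance, an instance $(\mathcal P, G, \mathcal T)$ of \PSSP over an orthogonal polygon, so that legal configurations of the NCL machine correspond to ``safe'' states of the searchlight schedule, legal edge reversals correspond to sequences of guard moves, and the final target orientation of $e_b$ corresponds to the target region $\mathcal T$ being cleared. The heart of the construction is a gadget-based layout: each edge of the planar NCL graph becomes a thin orthogonal corridor, the orientation of the edge being encoded by which of two ``contamination reservoirs'' at the corridor's ends is currently leaking into it; AND and OR vertices become junction gadgets where three corridors meet, patrolled by a constant number of guards whose admissible laser positions enforce exactly the AND/OR inward-edge constraints. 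Crucially, I would exploit the protected (all-guard-invisible) pockets described around Figure~\ref{fig0:a}: each such pocket is a permanent source of recontamination, so a corridor can only be kept clear by a guard actively ``holding'' it, and releasing that guard recontaminates the corridor — this is what makes an edge reversal a genuine two-phase, reversible operation rather than a one-shot clearing.

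The steps, in order, would be: (1) fix the orthogonal embedding of the 3-connected 3-regular planar NCL graph on a polynomial-size grid, and replace edges by corridors and vertices by the AND/OR gadgets; (2) design the OR gadget so that its guards have a legal laser configuration exactly when at least one of the three incident corridors is in its ``inward'' (clear-from-this-side) state, and show that any single incident edge can be flipped from inward to outward provided another is already inward — mirroring a legal OR move; (3) design the AND gadget analogously, enforcing ``output inward OR both inputs inward''; (4) wire in the two distinguished edges $e_a$ and $e_b$: attach to $e_a$ an initialization sub-gadget forcing the starting laser position to correspond to $e_a$ having its target orientation, and let $\mathcal T$ be a small orthogonal region (a hexagon here; a rectangle after the modification of Section~\ref{convexregion}) that is cleared if and only if $e_b$ attains its target orientation and stays clear; (5) invoke Theorem~\ref{asynch}/Corollary~\ref{eeancl} to argue both directions — a legal \EEANCL move sequence serializes into guard moves clearing $\mathcal T$, and conversely any successful schedule, read off gadget by gadget, induces a legal asynchronous (hence synchronous) move sequence from $A$ to $B$; (6) verify that all coordinates are polynomially bounded integers, so the reduction is polynomial and the hardness is \emph{strong}.

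The main obstacle I expect is step (2)–(3) together with the soundness direction of step (5): making the gadgets \emph{tight}. It is easy to build a gadget whose legal states include the NCL-legal ones; the difficulty is ruling out ``cheating'' schedules that clear $\mathcal T$ by exploiting geometry the NCL machine does not sanction — e.g.\ a guard sweeping a corridor in a way that momentarily clears a pocket, or two gadgets interacting through a shared wall, or a searchlight clearing a region and then that region staying clear ``for free'' because no recontamination source reaches it. Controlling recontamination precisely — guaranteeing that every corridor and junction region has an all-invisible pocket adjacent to it, so that it is recontaminated the instant no laser blocks it, and that the only way to block it is the intended one — is the delicate part, and it is exactly why the asynchronous model of Theorem~\ref{asynch} is convenient: it lets several edges be ``mid-reversal'' at once, which matches the fact that in the polygon several corridors can be simultaneously contaminated while guards reposition, without forcing an artificial serialization that the geometry might not support. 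I would also need the time-reversal \emph{non}-invariance noted after Figure~\ref{fig0:a}: it is what lets a corridor encode a directed (orientable) bit rather than a symmetric one.
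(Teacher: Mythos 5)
Your high-level architecture (corridors for edges, AND/OR junction gadgets, all-invisible pockets as permanent recontamination sources, asynchrony to avoid artificial serialization) matches the paper's, but the proposal is missing the one structural idea that makes the soundness direction work, and you have in fact correctly located the gap yourself without filling it. In \PSSP only the target region $\mathcal T$ must end up clear; recontaminating a corridor or a junction gadget that sits far away from $\mathcal T$ carries no penalty unless that recontamination can \emph{reach} $\mathcal T$. Your step (4) places $\mathcal T$ as a small region tied to $e_b$, so a ``cheating'' schedule could simply violate the AND/OR constraints at every distant gadget, let those gadgets stay contaminated forever, and still clear $\mathcal T$. The paper's resolution is a global broadcast mechanism: every vertex gadget and every joint between consecutive corridor subsegments is connected by a dedicated thin \emph{pipe} to a single upper area containing $\mathcal T$; each pipe has an unclearable nook, and the pipe's internal guard can neutralize that nook (keeping the pipe clear) only while the pipe's lower end is capped by a subsegment guard --- which happens exactly when the corresponding vertex constraint is satisfied. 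Thus any constraint violation, anywhere in the construction, immediately leaks contamination through its pipe into $\mathcal T$. Without these conduits (or an equivalent device) your gadgets can be locally correct yet globally unenforced.

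Two further points where your sketch diverges from what is actually needed. First, you cannot ``force the starting laser position'' for $e_a$: the initial orientation is part of the schedule the adversary chooses, so the reduction must instead ensure that the \emph{clearing of $\mathcal T$ cannot begin} until $e_a$ is in its target orientation and all vertex constraints hold simultaneously. The paper does this with a sweeping guard that must cap all pipes one by one to clear the left half of $\mathcal T$, gated by an extra OR-like pipe tied to $e_a$'s corridor, plus a nook that keeps part of $\mathcal T$ dirty while the sweep is in progress. Second, the target region must be split into two pieces (separated by a dedicated guard's laser): one certifying that a legal configuration $A$ with $e_a$ oriented correctly was reached and that legality is preserved thereafter, and one, cleared last, certifying that $e_b$ reaches its target orientation; a single monolithic $\mathcal T$ keyed only to $e_b$ does not encode the edge-to-edge condition. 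These are not cosmetic details --- they are precisely the ``tightness'' you flagged as the main obstacle, so as written the proposal identifies the problem but does not yet constitute a proof.
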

\begin{proof}
We show how to transform a given asynchronous NCL machine $G$ with two distinguished edges $e_a$ and $e_b$ into an instance of \PSSP.

A rough sketch of our construction is presented in Figure~\ref{fig1}. All the vertices of $G$ are placed in a row~(a), and are connected together by a network of thin \emph{corridors}~(b), turning at right angles, representing edges of $G$. (Although $G$ is 3-regular, only a few of its edges are sketched in Figure~\ref{fig1}.)
Each \emph{subsegment} of a corridor is a thin rectangle, containing a \emph{subsegment guard} in the middle (not shown in Figure~\ref{fig1}). Two subsegments from different corridors may indeed cross each other like in~(c), but in such a way that the crossing point is far enough from the ends of the two subsegments and from the two subsegment guards (so that no subsegment guard can see all the way through another subsegment). All the vertices of $G$ and all the \emph{joints} between consecutive subsegments (i.e., the turning points of each corridor) are connected via extremely thin \emph{pipes}~(d) to the upper area~(e), which contains the target region (shaded in Figure~\ref{fig1}).

\begin{figure}[ht]
\centering
\includegraphics[width=\linewidth]{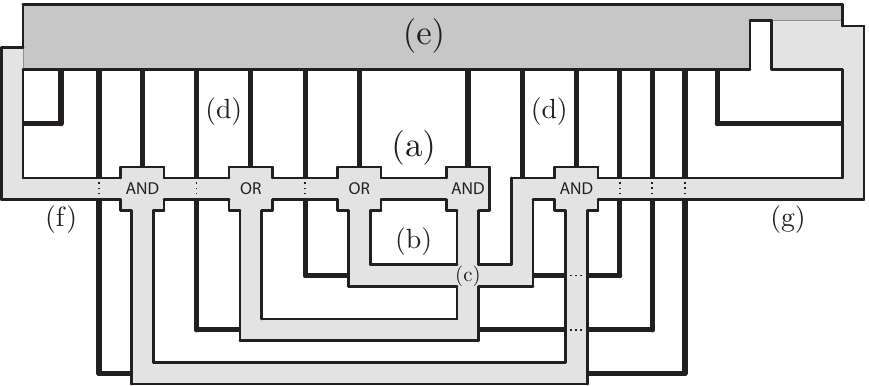}
\caption{Construction overview}
\label{fig1}
\end{figure}

Two corridors~(f) and~(g) also reach the upper area, and they correspond to the distinguished edges of $G$, $e_a$ and $e_b$, respectively. That is, if $e_a=\{u,v\}$, and the target orientation of $e_a$ is toward $v$, then the corridor corresponding to $e_a$ connects vertex $u$ in our construction to the upper area~(e), rather than to $v$. The same holds for $e_b$. Indeed, observe that we may assume that $e_a$ and $e_b$ are reversed only once (respectively, on the first and last move) in a sequence of moves that solves \EEANCL on $G$. As a consequence, contributions to vertex constraints given by distinguished edges oriented in their target direction may be ignored.

Each pipe turns at most once, and contains one \emph{pipe guard} in the middle, lying on the boundary. Notice that straight pipes never intersect corridors, but some turning pipes do. Figure~\ref{fig2} shows a turning pipe, with its pipe guard~(a) and an intersection with a corridor~(b) (proportions are inaccurate). The \emph{intersection guards}~(c) separate the pipe from the corridor with their lasers (dotted lines in Figure~\ref{fig2}), without ``disconnecting'' the pipe itself. Although a pipe narrows every time it crosses a corridor, its pipe guard can always see all the way through it, because it is located in the middle. The small \emph{nook}~(d) is unclearable because no guard can see its bottom, hence it is a constant source of recontamination for the target region~(e), unless the pipe guard is covering it with its laser. (Each straight pipe also has a similar nook.)

\begin{figure}[h]
\centering
\includegraphics[width=\linewidth]{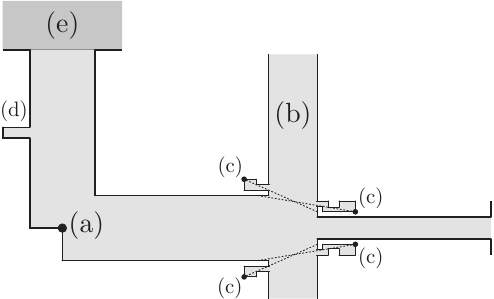}
\caption{Intersection between a pipe and a corridor}
\label{fig2}
\end{figure}

In our construction, corridor guards implement edge orientations in $G$: whenever all the subsegment guards in a corridor connecting vertices $u$ and $v$ have their lasers oriented in the same ``direction'' from vertex $u$ to vertex $v$, it means that the corresponding edge $\{u,v\}$ in $G$ is oriented toward $v$.

Figure~\ref{fig3} shows an OR vertex. The three subsegment guards from incoming corridors~(a) can all ``cap'' pipe~(b) with their lasers, and nook~(c) guarantees that the pipe is recontaminated whenever all three guards turn their lasers away.

\begin{figure}[h]
\centering
\includegraphics[width=\linewidth]{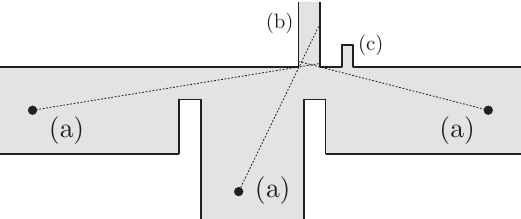}
\caption{OR vertex}
\label{fig3}
\end{figure}

AND vertices are implemented as in Figure~\ref{fig4}. The two subsegment guards~(a) correspond to input edges, and are able to cap one pipe~(e) each, whereas guard~(c) can cover them both simultaneously. But that leaves pipe~(d) uncovered, unless it is capped by guard~(b), which belongs to the corridor corresponding to the output edge. Again, uncovered pipes are recontaminated by unclearable nooks~(f).

\begin{figure}[h]
\centering
\includegraphics[width=\linewidth]{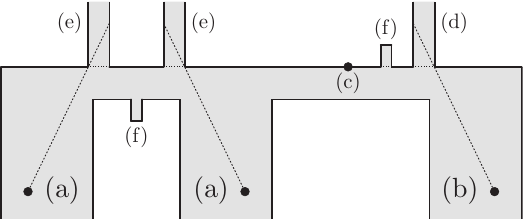}
\caption{AND vertex}
\label{fig4}
\end{figure}

Joints between consecutive subsegments of a corridor may be viewed as OR vertices with two inputs, shaped like in Figure~\ref{fig3}, but without the corridor coming from the left.

Finally, Figure~\ref{fig5} shows the upper area of the construction, reached by the distinguished edges $e_a$ and $e_b$ (respectively,~(a) and~(b)), and by all the pipes~(c). The guard in~(d) can cap all the pipes, one at a time, and its purpose is to clear the left part of the target region, while the small rectangle~(e) on the right will be cleared by the guard in~(f). The two pipes~(g) implement additional OR vertices with two inputs, and prevent~(d) and~(f) from acting, unless the respective distinguished edges are in their target orientations. Nook~(h) will contaminate part of the target region, unless~(d) is aiming down. Nooks~(i) prevent area~(e) from staying clear whenever guard~(f) is not aiming up. The guard in~(j) separates the two parts of the target region with its laser, so that they can be cleared in two different moments.

\begin{figure}[h]
\centering
\includegraphics[width=\linewidth]{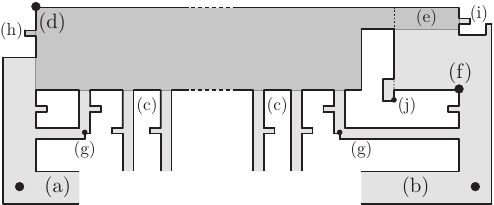}
\caption{Target region}
\label{fig5}
\end{figure}

Suppose $G$ is a solvable instance of \EEANCL. Then we can ``mimic'' the transition from configuration $A$ to configuration $B$ (see Section~\ref{sectionncl}) by turning subsegment guards. Specifically, if edge $e=\{u,v\}$ in $G$ changes its orientation from $u$ to $v$, then all the subsegment guards in the corridor corresponding to $e$ turn their lasers around, one at a time, starting from the guard closest to $u$. Before this process starts, each pipe has one end capped by some subsegment guard, and in particular pipe~(g) on the left of Figure~\ref{fig5} is capped by the guard in~(a). Hence, guard~(d) is free to turn and cap all the pipes one by one, stopping for a moment to let each pipe's internal guard clear the pipe itself (which now has both ends capped) and cover its nook (see Figure~\ref{fig2}). As a result, the left part of the target region can be cleared by rotating~(d) clockwise, from right to down. Then the subsegment guards start rotating as explained above, until configuration $B$ is reached. If done properly, this keeps all the pipes capped and clear, thus preventing the left part of the target region from being recontaminated. (Note that it makes a difference whether we turn a subsegment guard clockwise or counterclockwise: sometimes, only one direction prevents the recontamination of the pipe that the guard is capping.) When $B$ is reached, guard~(f) can turn up to clear~(e) and finally solve our \PSSP instance.

Conversely, suppose that $G$ is not solvable. Observe that rectangle~(e) in Figure~\ref{fig5} has to be cleared by guard~(f) as a last thing, because it will be recontaminated by nooks~(i) as soon as~(f) turns away. On the other hand, whenever a pipe has both ends uncapped by external guards, some portion of the target region necessarily gets recontaminated by some nook, regardless of where the pipe guard is aiming its laser. But guard~(d) can cap just one pipe at a time and, while it does so, nook~(h) keeps some portion of the target region contaminated. Thus, the entire process must start from a configuration $A$ in which all the pipes' lower ends are simultaneously capped by subsegment guards, and guard~(d) is free to turn (i.e., $e_a$ is in its target orientation). From this point onward, no pipe's lower end may ever be uncapped (i.e., legality must be preserved), otherwise the target region gets recontaminated, and the process has to restart. Finally, a configuration $B$ must be reached in which guard~(f) is free to turn up (i.e., $e_b$ is in its target orientation). By assumption this is impossible, hence our \PSSP instance is unsolvable.
\end{proof}

By putting together Lemma~\ref{lemma1} and Lemma~\ref{lemma2}, we immediately obtain the following:

\begin{theorem} \label{maintheorem}
Both \PSSP and its restriction to orthogonal polygons are strongly \PSPACE-complete. \hfill $\square$
\end{theorem}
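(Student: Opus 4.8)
The plan is to obtain Theorem~\ref{maintheorem} as an immediate consequence of the two preceding lemmas, with only a short additional argument to account for the word ``strongly''. For membership, Lemma~\ref{lemma1} places \PSSP in \NPSPACE, and Savitch's theorem gives $\NPSPACE=\PSPACE$, so $\PSSP\in\PSPACE$; the restriction to orthogonal polygons is a subproblem, hence also in \PSPACE. For hardness, Corollary~\ref{eeancl} states that \EEANCL is \PSPACE-complete, and Lemma~\ref{lemma2} exhibits a polynomial-time reduction $\EEANCL\preceq_{\P}\PSSP$ whose output is always an orthogonal polygon (with a hexagonal target region). Composing the two, \PSSP restricted to orthogonal polygons is \PSPACE-hard, and \emph{a fortiori} so is unrestricted \PSSP. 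Combining membership and hardness yields \PSPACE-completeness in both cases.

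To upgrade this to \emph{strong} \PSPACE-completeness --- i.e., hardness even when all numbers in the instance are encoded in unary --- I would check that the polygon built in the reduction of Lemma~\ref{lemma2} admits a layout on an integer grid of side length polynomial in the size of the input NCL machine $G$. Each gadget (a subsegment of a corridor, a pipe with its nook, an intersection gadget as in Figure~\ref{fig2}, an AND or OR vertex as in Figures~\ref{fig3}--\ref{fig4}, and the upper target area of Figure~\ref{fig5}) has constant combinatorial complexity and requires only constant ``resolution''; since $G$ is planar with $O(|V(G)|)$ vertices and edges, a standard orthogonal drawing routes all corridors and pipes within a box of polynomial dimensions, so every vertex of the polygon can be given integer coordinates of polynomially bounded magnitude. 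Consequently the reduction is still polynomial-time under a unary encoding, and \PSSP is \PSPACE-hard in the strong sense; membership is, of course, unaffected by the encoding.

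For the variant in which the target region is required to be a rectangle (hence convex), I would invoke the modification of the construction described in Section~\ref{convexregion}, which replaces the hexagonal target of Lemma~\ref{lemma2} by a rectangle while keeping the polygon orthogonal and the layout of polynomial size; strong \PSPACE-hardness (and membership) therefore transfer verbatim. Assembling these pieces gives the theorem.

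The step I expect to be the main obstacle is not any single deduction but the bookkeeping behind the ``strongly'' claim: one must verify that the many geometric side conditions used throughout Lemma~\ref{lemma2} --- for instance that each subsegment guard cannot see all the way through a crossing subsegment, that a pipe guard \emph{can} see through its pipe despite the narrowings at every corridor crossing, and that the various nooks are invisible to all guards --- can all be satisfied \emph{simultaneously} at polynomial resolution. This is routine in spirit, but it is the only part of the proof that does not follow directly from the cited results.
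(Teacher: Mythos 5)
Your proposal is correct and matches the paper's own argument: the theorem is obtained by combining Lemma~\ref{lemma1} (membership via Savitch) with Lemma~\ref{lemma2} and Corollary~\ref{eeancl} (hardness), and the ``strongly'' qualifier is justified exactly as you do, by observing that the reduction's vertex coordinates can be kept polynomially bounded (the paper notes they have polynomially many digits, possibly after tiny adjustments). The rectangular-target variant you discuss is actually deferred by the paper to a separate theorem in Section~\ref{convexregion}, but including it here is harmless.
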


The term ``strongly'' is implied by the fact that all the vertex coordinates generated in the \PSPACE-hardness reduction of Lemma~\ref{lemma2} are numbers with polynomially many digits (or can be made so through tiny adjustments that do not compromise the validity of the construction).

\section{Convexifying the target region} \label{convexregion}

We can further improve our Theorem~\ref{maintheorem} by making the target region in Lemma~\ref{lemma2} rectangular.

Our new target region has the same width as the previous one, and the height of rectangle~(e) in Figure~\ref{fig5}. In order for this to work, we have to make sure that some portion of the target region is ``affected'' by each contaminated pipe that is not capped by guard~(d), no matter where \emph{all} the pipe guards are oriented. To achieve this, we make pipes reach the upper area of our construction at increasing heights, from left to right, in a staircase-like fashion.

\begin{figure}[h]
\centering
\includegraphics[width=\linewidth]{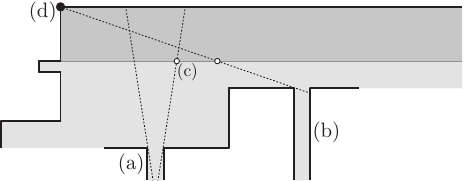}
\caption{Rectangular target region}
\label{fig6}
\end{figure}

Assume we already placed pipe~(a) in Figure~\ref{fig6}, and we need to find the correct height at which it is safe to connect pipe~(b). First we find the rightmost intersection~(c) between a laser emanating from the pipe guard of~(a) and the lower border of the target region. Then we set the height of pipe~(b) so that it is capped by guard~(d) when it aims slightly to the right of~(c). This is always feasible, provided that pipes are thin enough, which is obviously not an issue.

After we have set all pipes' heights from left to right, the construction is complete and the proof of Lemma~\ref{lemma2} can be repeated verbatim, yielding:

\begin{theorem}
Both \PSSP and its restriction to orthogonal polygons with rectangular target regions are strongly \PSPACE-complete. \hfill $\square$
\end{theorem}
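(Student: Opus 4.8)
The plan is to keep the entire reduction of Lemma~\ref{lemma2} intact and alter only the geometry of the upper area (Figure~\ref{fig5}), so that the target region becomes a single axis-parallel rectangle. Membership comes for free: the cell-decomposition argument of Lemma~\ref{lemma1} never uses the shape of $\mathcal T$, so it applies verbatim to a rectangular target, and only hardness needs attention.

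First I would take the new target region to be the rectangle having the same width as the old orthogonal-hexagon target but only the height of the small rectangle~(e) of Figure~\ref{fig5}; informally, we discard the tall left strip and retain only the thin horizontal band along the bottom. The reduction map and all gadgets --- corridors and their subsegment guards, AND and OR vertices, joints, the nooks, guards~(d), (f) and~(j), and the two extra OR-pipes~(g) --- remain literally unchanged. The only design freedom we exercise is the height at which each pipe~(c) is allowed to meet the upper area.

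Next I would fix these heights greedily, left to right, as in Figure~\ref{fig6}. Suppose pipes $1,\dots,k-1$ have been placed; ranging over all orientations of their pipe guards, I would locate the rightmost point $c_k$ at which a laser of one of those guards meets the lower edge of the rectangular target, and then attach pipe $k$ at a height such that guard~(d) caps pipe $k$ exactly when its laser passes just to the right of $c_k$. Since pipes may be made arbitrarily thin, such a height always exists, so the procedure never stalls and halts after the (polynomially many) pipes are placed. One should also check that raising a pipe --- which only lengthens it --- preserves planarity and the crossing discipline (clean corridor crossings via intersection guards, still-unclearable nooks, pipe guards still able to see through), but this is routine.

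The crux, and the step I expect to be the main obstacle, is re-establishing --- now for a \emph{thin} target --- the recontamination invariant that powers the backward direction of Lemma~\ref{lemma2}: \emph{every contaminated pipe not capped by guard~(d) forces some portion of the rectangular target to be recontaminated, no matter how \emph{all} the pipe guards are oriented.} With the old tall target this was geometrically immediate; with a thin target a pipe guard belonging to a different pipe might conceivably shine across and shield the strip that our pipe's nook would otherwise spoil. The staircase placement is designed to preclude exactly this: by construction, the portion of the target's lower edge just to the right of $c_k$ is visible only through pipe $k$ and is unreachable by guard~(d) whenever it caps any pipe placed before $k$, so when pipe $k$ is contaminated and uncapped by~(d) that strip is recontaminated by pipe $k$'s nook. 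Verifying that this really holds for every pipe and every configuration of the pipe guards is the one genuinely new geometric argument; once it is in place, the forward direction (mimicking the transition from $A$ to $B$ by turning subsegment guards while guard~(d) sweeps the pipes one by one) and the backward direction (the forced --- and, by unsolvability, impossible --- passage through configurations $A$ and $B$) of Lemma~\ref{lemma2} carry over word for word. Finally, strongness is preserved because the additional pipe heights are polynomially bounded, keeping all coordinates at polynomially many digits.
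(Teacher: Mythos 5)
Your proposal matches the paper's proof essentially exactly: same thin rectangular target of the width of the old one and the height of rectangle~(e), same left-to-right staircase placement of pipe heights chosen so that guard~(d) caps each new pipe just to the right of the rightmost laser--target intersection produced by the previously placed pipes, and the same observation that the rest of Lemma~\ref{lemma2} then carries over verbatim. The only (harmless) difference is that you take the rightmost intersection over \emph{all} previously placed pipe guards rather than just the immediately preceding one, which is if anything slightly more careful.
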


\section{Further research}

There are several promising directions for future research. We suggest a few.

We could simplify \PSSP by asking if there exists a neighborhood of a given point, no matter how small, that is clearable. Let this problem be called $\PSSP^\star$. In contrast with \PSSP, here we do not have a polygonal target region, but we are interested just in the surroundings of a point. It is easy to show that $\PSSP^\star \preceq_{\P} \PSSP$: clearing a small-enough neighborhood of a point is equivalent to clearing the cells whose topological closure contains the point (cf.~the proof of Lemma~\ref{lemma1}). The author proved in~\cite{thesis} that a 3-dimensional version of $\PSSP^\star$ is \PSPACE-hard, even restricted to orthogonal polyhedra. Our question is whether $\PSSP^\star$ is \PSPACE-hard (hence \PSPACE-complete) for 2-dimensional polygons, as well.

Similarly, we may investigate the complexity of \PSSP on other restricted inputs, such as simply connected polygons, or target regions coinciding with the whole environment. The latter is in fact \SSP, whose complexity has been mentioned in Section~\ref{sectionintro} as an interesting long-standing open problem. Although the author proved that a 3-dimensional version of \SSP is \NP-hard~\cite{viglietta2}, determining the true complexity of either version still seems a deep problem. Recall that, in our \PSPACE-hardness reduction of Lemma~\ref{lemma2}, we repeatedly used regions that are visible to no guard, and hence can never be cleared. As a matter of fact, this is a remarkably effective way to force the recontamination of other areas whenever certain conditions are met. However, this expedient is of no use in a reduction for \SSP (trivially, if the guards cannot see the whole polygon, they cannot search it), and cleverer tools have to be devised for this problem.

Other interesting variations of \SSP involve the addition of new environmental elements, such as \emph{mirrors}, which specularly reflect lasers; \emph{transparent walls}, which can be traversed by lasers but not by the intruder; and \emph{curtains}, which can be traversed by the intruder but block lasers. To the best of our knowledge, none of these elements has ever been studied in connection with \SSP.

\small
\bibliographystyle{abbrv}

\end{document}